\documentclass[conference]{IEEEtran}
\IEEEoverridecommandlockouts

\usepackage{amsmath,amssymb,amsfonts,amsthm}
\usepackage{graphicx}
\usepackage{textcomp}
\usepackage{xcolor}
\usepackage{hyperref}
\usepackage{array}
\usepackage{booktabs}
\usepackage{bm}
\usepackage{tabularx}
\usepackage{tikz}
\usepackage[letterpaper,top=0.8in,bottom=1.09in,left=0.66in,right=0.66in]{geometry}

\usetikzlibrary{positioning,arrows.meta}

\def\BibTeX{{\rm B\kern-.05em{\sc i\kern-.025em b}\kern-.08em
    T\kern-.1667em\lower.7ex\hbox{E}\kern-.125emX}}

\newtheorem{theorem}{Theorem}
\newtheorem{lemma}{Lemma}
\newtheorem{corollary}{Corollary}
\newtheorem{remark}{Remark}
\newtheorem{definition}{Definition}
\newtheorem{example}{Example}
\newtheorem{proposition}{Proposition}

\begin{document}

\title{Combinatorial Capacity Bounds for the $q$-ary Deletion Channel $^{\star}$}

\author{
  \IEEEauthorblockN{Hassan Tavakoli}
  \IEEEauthorblockA{School of EECS\\
                    Oregon State University\\
                    Oregon, OR 97331, USA\\
                    tavakolh@oregonstate.edu}
  \and
  \IEEEauthorblockN{Thinh Nguyen, \emph{Senior Member, IEEE}}
  \IEEEauthorblockA{School of EECS\\
                    Oregon State University\\
                    Oregon, OR 97331, USA\\
                    thinhq@eecs.oregonstate.edu}
  \and
  \IEEEauthorblockN{Bella Bose, \emph{Life Fellow, IEEE}\thanks{ $^{\star}$ This This work was supported by the National Science Foundation under Grant No. CCF:SHF:2417898.}}
  \IEEEauthorblockA{School of EECS\\
                    Oregon State University\\
                    Oregon, OR 97331, USA\\
                    Bella.Bose@oregonstate.edu}
}

\maketitle

\begin{abstract}
We study the \(q\)-ary deletion channel via the pattern-count scalar
\(N_n(x,y)\), the number of deletion subsets mapping \(x\in\Sigma_q^n\)
to \(y\in\Sigma_q^k\), which factorizes the transition probability.
Two sum identities on \(N_n\) certify stochastic normalization and,
under uniform input, yield an exact closed-form output entropy.
These give the finite-block capacity sandwich
\(
(1-d)\log_2 q-h_2(d)\;\le\; C_{q,n}\;\le\;(1-d)\log_2 q.
\)
The exact uniform-input rate is
\(
\frac{1}{n}I_U(X;Y)
=(1-d)\log_2 q+\frac{1}{n}H_{\mathrm{Bin}}(n,1-d)-h_2(d)+\frac{\Delta_n(d)}{n},
\)
from which the simpler certified bound
\(
C_{q,n}\ge (1-d)\log_2 q-h_2(d)+\frac{\Delta_n(d)}{n}
\)
follows.
The small-\(d\) bound \(C_q(d)\ge\log_2 q+d\log_2 d+O(d)\) follows
for all \(q\ge 2\). Numerical experiments at \(n=3,5,10\) and \(q=2,3\)
confirm all bounds.
\end{abstract}

\begin{IEEEkeywords}
deletion channel, \(q\)-ary channel, pattern-count scalar,
channel capacity, entropy bounds, Blahut--Arimoto
\end{IEEEkeywords}

\section{Introduction}
\label{sec:intro}

The deletion channel erases each transmitted symbol independently with
probability $d$ and delivers the surviving subsequence without marking
which positions were deleted. Despite decades of study, tight capacity
characterizations remain open~\cite{mitzenmacher2009}.
The key object in this paper is the integer scalar
\begin{equation}
  N_n(x,y)\triangleq \#\bigl\{S\subseteq[n]:|S|=n-k,\,x_{\setminus S}=y\bigr\},
  \label{eq:Nn}
\end{equation}
where $x_{\setminus S}$ is the word obtained by deleting the positions in $S$ from $x$. 
~\eqref{eq:Nn} counts deletion subsets mapping $x\in\Sigma_q^n$ to
$y\in\Sigma_q^k$ and factorizes the transition probability as
$\Pr(Y=y\mid X=x)=N_n(x,y)\cdot d^{n-k}(1-d)^k$~\cite{mitzenmacher2009}.
The scalar $N_n(x,y)$ satisfies a two-case recursion uniform in $q\ge 2$,
separating the combinatorial structure of the channel from its
probabilistic weights and organizing all capacity calculations through
a single integer array.

\paragraph*{Prior work}
Capacity lower bounds for the BDC are due to Gallager~\cite{gallager1961}
and Drinea--Mitzenmacher~\cite{drinea2007,mitzenmacher2006}; the
Kanoria--Montanari expansion gives small-$d$ asymptotics for
$q=2$~\cite{kanoria2013}. Upper bounds appear in~\cite{dalai2011,rahmati2014}.
Related capacity bounds for cascaded and related deletion models appear
in~\cite{tavakoli2022}, and improved upper bounds for the binary deletion
channel appear in~\cite{tavakoli2021}.
Fertonani--Duman computed BA capacity numerically for $q=2$,
$n\le 10$~\cite{fertonani2010}. The present work provides a
self-contained $q$-ary treatment via $N_n(x,y)$ with explicit
entropy correction and certified capacity bounds for all $q\ge 2$.

The rest of the paper is organized as follows: Section~\ref{sec:model} introduces $N_n(x,y)$ and its factorization
of the transition probability. Then, gives the
two-case recursion for $N_n$. Section~\ref{sec:bounds} at first proves the
row- and column-sum identities and derives the
exact output entropy under uniform input and the capacity sandwich.
Section~\ref{sec:tighter} introduces the entropy correction $\Phi_{k,n}$,
proves the tightened lower bound with $\Delta_n(d)>0$ certified, and
establishes the small-$d$ bound $C_q(d)\ge\log_2 q+d\log_2 d+O(d)$.
Section~\ref{sec:numerical} presents numerical verification.

\paragraph*{Notation}
$\Sigma_q=\{0,\ldots,q-1\}$; $[n]\triangleq \{1,\ldots,n\}$;
$h_2(p)\triangleq -p\log_2 p-(1-p)\log_2(1-p)$;
$H_{\mathrm{Bin}}(n,p)\triangleq -\sum_{k=0}^n\binom{n}{k}p^k(1-p)^{n-k}
\log_2[\binom{n}{k}p^k(1-p)^{n-k}]$.
Words are ordered lexicographically; $\varepsilon$ denotes the empty
string; $x_{\setminus S}$ denotes the word obtained by deleting
positions $S$ from $x$.  We write $\sigma_{k,n}\triangleq d^{n-k}(1-d)^k$ and
$w_k\triangleq \binom{n}{k}d^{n-k}(1-d)^k$. Also, $\{0,1\}^2$ means $\Sigma^2_2=\{00,01,10,11\}$ and it is extendable to $\Sigma^n_2=\{0,1\}^n$.
Throughout, $H(\cdot)$ denotes entropy and $I(X;Y)$ denotes the mutual information between $X$ and $Y$.
$H_{\mathrm{Bin}}(n,p)$ is also the entropy of a $\mathrm{Bin}(n,p)$ random variable. \(\log\) is in base 2 and \(h(d)\) is
binary entropy on base-2, also, $0\log_2 0\triangleq 0$.

\section{Channel Model and Pattern-Count Scalar}
\label{sec:model}

\subsection{Channel Definition}

Let the input be $X=x_1\cdots x_n\in\Sigma_q^n$.  Each symbol $x_j$
is deleted independently with probability $d\in[0,1)$; surviving
symbols concatenate in order to form $Y\in\bigcup_{k=0}^n\Sigma_q^k$.

\subsection{The Pattern-Count Scalar}

\begin{lemma}[Pattern-count factorization]
\label{lem:factorization}
For all $x\in\Sigma_q^n$ and $y\in\Sigma_q^k$:
\(
  \Pr(Y=y\mid X=x)=N_n(x,y)\cdot\sigma_{k,n}
  =N_n(x,y)\cdot d^{n-k}(1-d)^k.
\)
\end{lemma}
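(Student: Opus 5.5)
The plan is to condition on the random set of deleted positions. Let $D\subseteq[n]$ be the random set of deleted indices. Since each position is deleted independently with probability $d$, for any fixed $S\subseteq[n]$ we have
\[
\Pr(D=S)=d^{|S|}(1-d)^{n-|S|}.
\]
The channel output is deterministic given $D$ and $x$, namely $Y=x_{\setminus D}$.

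By the law of total probability,
\[
\Pr(Y=y\mid X=x)=\sum_{S\subseteq[n]}\Pr(D=S)\,\mathbf{1}\{x_{\setminus S}=y\},
\]
where the deletion mechanism is independent of the input, so conditioning on $X=x$ does not change the law of $D$.

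The key observation is that $x_{\setminus S}$ has length $n-|S|$. Hence the indicator can be nonzero only when $n-|S|=k$, that is, $|S|=n-k$. Restricting the sum to such $S$, every surviving term carries the same weight $d^{n-k}(1-d)^{k}=\sigma_{k,n}$. This weight factors out of the sum.

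What remains is exactly the number of subsets $S$ with $|S|=n-k$ and $x_{\setminus S}=y$. By definition~\eqref{eq:Nn}, that count is $N_n(x,y)$, which gives the claim.

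I do not expect a real obstacle, only two bookkeeping points:
\begin{itemize}
\item Distinct deletion sets $S$ are counted separately even when they produce the same word $y$. This is correct because the events $\{D=S\}$ are disjoint, so their probabilities add.
\item The edge cases need a check. For $k=0$, $y=\varepsilon$ and $S=[n]$, giving $N_n=1$ and probability $d^n$. For $d=0$ with $k<n$, both sides vanish under the convention $0^0=1$ when $k=n$.
\end{itemize}
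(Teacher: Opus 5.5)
Your proposal is correct and takes essentially the same approach as the paper: condition on the deletion set, note that only sets with $|S|=n-k$ and $x_{\setminus S}=y$ contribute, and factor out the common weight $d^{n-k}(1-d)^k$. Your added remarks on the disjointness of the events $\{D=S\}$ and on the edge cases are sound details that the paper leaves implicit.
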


\begin{proof}
Conditioning on the deletion subset $S$: the event that exactly the
positions in $S$ are deleted has probability $d^{|S|}(1-d)^{n-|S|}$.
For output $y$ with $|y|=k$ we need $|S|=n-k$ and $x_{\setminus S}=y$.
Summing over all such $S$ gives
$\Pr(Y=y\mid X=x)=N_n(x,y)\cdot d^{n-k}(1-d)^k$.
\end{proof}

The weight $\sigma_{k,n}$ is the probability that a specific fixed
set of $n-k$ positions are deleted and the $k$ remaining survive.
The integer $N_n(x,y)$ counts how many such sets exist.
The extreme cases are immediate from the definition.  If $k=n$ (no
deletions), the only output is $y=x$, so $N_n(x,x)=1$ and
$N_n(x,y)=0$ for $y\ne x$.  If $k=0$ (all symbols deleted), the
only output is the empty string $\varepsilon$, so $N_n(x,\varepsilon)=1$
for all $x$.

\begin{example}[$n=2$, $q=2$]
\label{ex:n2q2}
For $x=00\in\{0,1\}^2$ and $y=0\in\{0,1\}^1$:
$N_2(00,0)=2$ (delete position 1, or delete position 2; both yield $0$).
For $x=01$ and $y=0$: $N_2(01,0)=1$ (only deleting position 2 yields
$0$; deleting position 1 gives $1$).
The complete scalar arrays for $|y|=k=1$ are:
\begin{table}[h]
\centering
{\arraycolsep=3pt\small$\begin{array}{c|cc} x\backslash y & 0 & 1 \\\hline 00 & 2 & 0 \\ 01 & 1 & 1 \\ 10 & 1 & 1 \\ 11 & 0 & 2 \end{array}$}
\end{table}
\end{example}

\subsection{Recursive Structure of $N_n$}
\label{sec:recursion}

\begin{proposition}[$q$-ary recursion for $N_n$]
\label{thm:recursion}
For $n\ge 1$, $1\le k\le n-1$, and any $q\ge 2$, write $x=sx'$ with
$s\in\Sigma_q$, $x'\in\Sigma_q^{n-1}$, and $y=ty'$ with $t\in\Sigma_q$,
$y'\in\Sigma_q^{k-1}$.  Then:
\begin{equation}
  N_n(sx',ty')=N_{n-1}(x',ty')+\mathbf{1}_{\{s=t\}}N_{n-1}(x',y').
  \label{eq:rec}
\end{equation}
\end{proposition}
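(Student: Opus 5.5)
We prove \eqref{eq:rec} with a bijective count: split the deletion subsets counted by $N_n(sx',ty')$ according to whether position $1$ is deleted. Let $\mathcal{S}=\{S\subseteq[n]:|S|=n-k,\ x_{\setminus S}=y\}$, so that $N_n(x,y)=|\mathcal{S}|$. Write $\mathcal{S}=\mathcal{S}_1\sqcup\mathcal{S}_0$, where $\mathcal{S}_1$ contains the sets with $1\in S$ and $\mathcal{S}_0$ those with $1\notin S$. We then identify each part with a count on $x'$, using the index shift $\tau:\{2,\dots,n\}\to[n-1]$, $\tau(j)=j-1$. Under $\tau$, deleting positions $T\subseteq\{2,\dots,n\}$ from $x'$ (viewed inside $x$) matches deleting $\tau(T)$ from $x'$ as a word of length $n-1$.

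\emph{Case $1\in S$.} Write $S=\{1\}\cup T$ with $T\subseteq\{2,\dots,n\}$ and $|T|=n-k-1$. Since the first symbol is removed, $x_{\setminus S}=x'_{\setminus\tau(T)}$. So $S\in\mathcal{S}_1$ exactly when $\tau(T)$ is a subset of $[n-1]$ of size $(n-1)-k$ that maps $x'$ to $y=ty'$. The map $S\mapsto\tau(S\setminus\{1\})$ is therefore a bijection onto the sets counted by $N_{n-1}(x',ty')$. This uses $k\le n-1$, so that $n-k-1\ge 0$.

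\emph{Case $1\notin S$.} Now $S=T\subseteq\{2,\dots,n\}$ and $x_{\setminus S}=s\,x'_{\setminus\tau(T)}$. This equals $ty'$ if and only if $s=t$ and $x'_{\setminus\tau(T)}=y'$. The size condition also matches: $|T|=n-k=(n-1)-(k-1)$. Hence $|\mathcal{S}_0|=\mathbf{1}_{\{s=t\}}N_{n-1}(x',y')$, again via the bijection $S\mapsto\tau(S)$. This uses $k\ge1$, so that $y=ty'$ is defined; if $k=1$, then $y'=\varepsilon$ and the convention $N_{n-1}(x',\varepsilon)=1$ applies.

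Adding the two disjoint cases gives \eqref{eq:rec}. The only step that needs care is checking that $\tau$ respects both the order of surviving symbols and the cardinality constraint in each case, since the definition of $N$ fixes the subset size. Everything else is a direct translation of the definition, with no dependence on $q$.
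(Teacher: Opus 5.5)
Your proof is correct and takes the same approach as the paper, which conditions on whether the first symbol of $x$ is deleted or survives. Your explicit bijections via the index shift $\tau$, the subset-size checks, and the $k=1$ convention $N_{n-1}(x',\varepsilon)=1$ supply the details that the paper's one-line argument leaves implicit.
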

\begin{proof}
This is followed by conditioning
on whether the first symbol of $x$ is deleted or survives; an equivalent
identity appears in~\cite{mitzenmacher2009}. The recursion involves only
symbol equality $s=t$ and holds uniformly for all $q\ge 2$.
\end{proof}

\section{Main Results 1: Entropy Bounds and Capacity Sandwich}
\label{sec:bounds}

\subsection{Row-Sum and Column-Sum Identities}

\begin{lemma}[Row-sum identity]
\label{lem:rowsum}
For all $q\ge 2$, and fix \(k\), where $0\le k\le n$, and $x\in\Sigma_q^n$:
\begin{equation}
  \sum_{y\in\Sigma_q^k}N_n(x,y)=\binom{n}{k}.
  \label{eq:rowsum}
\end{equation}
\end{lemma}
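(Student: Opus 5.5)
The identity is a double-counting argument. By definition~\eqref{eq:Nn}, $N_n(x,y)$ counts the subsets $S\subseteq[n]$ with $|S|=n-k$ and $x_{\setminus S}=y$. Summing over all $y\in\Sigma_q^k$ therefore counts the pairs $(S,y)$ with $|S|=n-k$ and $x_{\setminus S}=y$.

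For a fixed $S$ of size $n-k$, the word $x_{\setminus S}$ is a uniquely determined element of $\Sigma_q^k$. So each such $S$ contributes to exactly one term of the sum, and contributes exactly $1$ there. Exchanging the order of summation gives
\begin{equation*}
\sum_{y\in\Sigma_q^k}N_n(x,y)=\sum_{\substack{S\subseteq[n]\\|S|=n-k}}\ \sum_{y\in\Sigma_q^k}\mathbf{1}_{\{x_{\setminus S}=y\}}=\#\{S\subseteq[n]:|S|=n-k\}=\binom{n}{n-k}=\binom{n}{k}.
\end{equation*}
The boundary cases $k=0$ and $k=n$ fit this argument without change: there is exactly one $S$, and the sum equals $1$, in agreement with the extreme cases noted after Lemma~\ref{lem:factorization}.

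As a cross-check, one could also argue probabilistically with Lemma~\ref{lem:factorization}. The event $\{|Y|=k\}$ has probability $\binom{n}{k}d^{n-k}(1-d)^k$, since the number of surviving symbols is $\mathrm{Bin}(n,1-d)$. It also equals $\sum_y N_n(x,y)\,\sigma_{k,n}$. Dividing by $\sigma_{k,n}$ gives the identity, but only when $0<d<1$. Since $N_n$ does not depend on $d$, the identity then holds in general, but this route needs that extra remark, so the direct counting argument is cleaner.

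A third option is induction on $n$ via Proposition~\ref{thm:recursion}. Summing~\eqref{eq:rec} over $t$ and $y'$, the first term gives $\binom{n-1}{k}$ and the indicator term gives $\binom{n-1}{k-1}$, so Pascal's rule closes the induction. This route requires handling the ranges $k=0$ and $k=n$ outside the recursion separately, so I would present the bijective count and mention the recursion only as a consistency check.

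There is no real obstacle here. The only point needing care is the observation that the map $S\mapsto x_{\setminus S}$ is a function, even though it is not injective. That is exactly why the multiplicities $N_n(x,y)$ add up to the total number of deletion patterns.
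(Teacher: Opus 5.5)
Your proof is correct and matches the paper's: both fix $x$ and note that each of the $\binom{n}{n-k}=\binom{n}{k}$ deletion subsets $S$ yields exactly one output $x_{\setminus S}\in\Sigma_q^k$, so the counts $N_n(x,y)$ sum to the number of subsets. Your probabilistic and recursive cross-checks are also valid but not needed.
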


\begin{proof}
Fix $x$.  Each $(n-k)$-element subset $S\subseteq[n]$ produces a
unique output $x_{\setminus S}\in\Sigma_q^k$.  There are exactly
$\binom{n}{n-k}=\binom{n}{k}$ such subsets.
\end{proof}

\begin{lemma}[Column-sum identity]
\label{lem:colsum}
For all $q\ge 2$, and fix \(n\), where $0\le k\le n$, $y\in\Sigma_q^k$:
\begin{equation}
  \sum_{x\in\Sigma_q^n}N_n(x,y)=q^{n-k}\binom{n}{k}.
  \label{eq:colsum}
\end{equation}
\end{lemma}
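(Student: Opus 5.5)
Fix $y\in\Sigma_q^k$. The plan is a double count of the pairs $(x,S)$ with $x\in\Sigma_q^n$, $S\subseteq[n]$, $|S|=n-k$, and $x_{\setminus S}=y$. By the definition~\eqref{eq:Nn}, the left-hand side of~\eqref{eq:colsum} is exactly the number of such pairs, with the pairs grouped by $x$:
\[
\sum_{x\in\Sigma_q^n}N_n(x,y)=\#\bigl\{(x,S): |S|=n-k,\ x_{\setminus S}=y\bigr\}.
\]
I will instead count the same set by first choosing $S$ and then counting the compatible $x$.

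Fix $S$ with $|S|=n-k$, and let $T=[n]\setminus S$, so that $|T|=k$. Write $T=\{t_1<\cdots<t_k\}$. The condition $x_{\setminus S}=y$ says precisely that $x_{t_i}=y_i$ for $i=1,\ldots,k$, because deletion preserves the order of the surviving symbols. Hence the $k$ surviving coordinates of $x$ are forced to equal $y$. The $n-k$ coordinates indexed by $S$ are unconstrained, so each can be any of the $q$ symbols.

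The map $x\mapsto (x_j)_{j\in S}$ is therefore a bijection from $\{x: x_{\setminus S}=y\}$ onto $\Sigma_q^{S}$. This gives exactly $q^{n-k}$ compatible words $x$ for each $S$. Summing over the $\binom{n}{n-k}=\binom{n}{k}$ choices of $S$ yields $q^{n-k}\binom{n}{k}$, which proves~\eqref{eq:colsum}.

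The only point requiring care is the bijection step: one must check that the positions in $T$ are determined uniquely by $y$ in order, and that distinct $S$ never need to be identified. The latter holds because the pair $(x,S)$ keeps $S$ as part of the data, even when different $S$ produce the same $x$. The boundary cases agree with the formula and follow from the same argument. For $k=n$ we have $S=\emptyset$ and a single $x=y$. For $k=0$ we have $S=[n]$, all $q^n$ words are compatible, and $N_n(x,\varepsilon)=1$. As an alternative check, one could induct on $n$ using Proposition~\ref{thm:recursion}, summing over $s\in\Sigma_q$ and applying Pascal's rule. The double count is more direct, and I would use it.
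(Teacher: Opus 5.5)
Your proposal is correct and follows the paper's proof: a double count of pairs $(x,S)$ with $|S|=n-k$ and $x_{\setminus S}=y$. For each of the $\binom{n}{k}$ choices of $S$, the surviving coordinates are forced to equal $y$ and the $n-k$ deleted coordinates are free, which gives $q^{n-k}\binom{n}{k}$.
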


\begin{proof}
Fix $y\in\Sigma_q^k$.  Count pairs $(x,S)$ with $x\in\Sigma_q^n$,
$|S|=n-k$, and $x_{\setminus S}=y$.  For each of the $\binom{n}{k}$
choices of $S$, the $k$ surviving positions of $x$ are fixed to $y$,
while the $n-k$ deleted positions are free over $\Sigma_q$,
contributing $q^{n-k}$ pairs per choice.
Hence $\sum_x N_n(x,y)=q^{n-k}\binom{n}{k}$.
\end{proof}

\subsection{Uniform Output Structure}

\begin{remark}[Why we use the uniform input]
\label{rem:uniform_input}
We use the uniform input distribution here for tractability, not as a
claim of global optimality. For the binary deletion channel, the
i.i.d.\ Bernoulli$(1/2)$ source is asymptotically optimal in the
small-deletion regime $d\to 0^+$, while memoryless sources are not
asymptotically optimal as $d\to 1^-$. \cite{kanoria2013,drmota2012}

In the present paper, uniform input is chosen because it yields the
layerwise-uniform output property in Lemma~\ref{lem:uniform} and the
closed-form entropy formula in Lemma~\ref{thm:HY}. This is an
analytic convenience, not a statement that the uniform input is
capacity-achieving for all $q$ and all $d$.
\end{remark}

\begin{lemma}[Uniform output under uniform input]
\label{lem:uniform}
Under the uniform input $X\sim\mathrm{Unif}(\Sigma_q^n)$, the output
$Y$ conditioned on $|Y|=k$ is uniformly distributed over $\Sigma_q^k$,
with marginal weight $w_k\triangleq \binom{n}{k}d^{n-k}(1-d)^k$.
\end{lemma}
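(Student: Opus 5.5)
We compute the output law directly from Lemma~\ref{lem:factorization} and then apply the column-sum identity of Lemma~\ref{lem:colsum}. Under $X\sim\mathrm{Unif}(\Sigma_q^n)$, each $x$ has probability $q^{-n}$. For any fixed $y\in\Sigma_q^k$, the law of total probability together with Lemma~\ref{lem:factorization} gives
\[
\Pr(Y=y)=\sum_{x\in\Sigma_q^n} q^{-n}\,N_n(x,y)\,\sigma_{k,n}
= q^{-n}\sigma_{k,n}\sum_{x\in\Sigma_q^n}N_n(x,y).
\]
By Lemma~\ref{lem:colsum}, the remaining sum equals $q^{n-k}\binom{n}{k}$. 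Hence
\[
\Pr(Y=y)=q^{-k}\binom{n}{k}d^{n-k}(1-d)^k=q^{-k}w_k .
\]
This value depends on $y$ only through its length $k$, which is the heart of the claim.

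Next we sum over the layer $\Sigma_q^k$, which has $q^k$ elements. This gives $\Pr(|Y|=k)=w_k$, the stated marginal weight. It agrees with the direct observation that $|Y|\sim\mathrm{Bin}(n,1-d)$, since each symbol survives independently with probability $1-d$.

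Finally, for $d\in[0,1)$ we have $w_k>0$ for every $0\le k\le n$. Indeed, $d^{n-k}$ is positive for $k<n$, and for $k=n$ it is read as $d^0=1$. So the conditional law is well defined and equals
\[
\Pr(Y=y\mid |Y|=k)=\frac{q^{-k}w_k}{w_k}=q^{-k},
\]
which is the uniform distribution on $\Sigma_q^k$.

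There is no real obstacle here; the only point needing care is this edge case. At $d=0$, every layer with $k<n$ has zero probability. Conditioning on those layers is vacuous, or can be handled by the convention $0^0=1$ so that only the layer $k=n$ carries mass. The layer $k=0$ needs no special treatment: it is just the single word $\varepsilon$, with $N_n(x,\varepsilon)=1$ for every $x$.
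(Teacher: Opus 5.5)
Your argument is correct and is essentially the paper's proof: both combine Lemma~\ref{lem:factorization} with the column-sum identity to get $\Pr(Y=y)=q^{-n}\sigma_{k,n}\,q^{n-k}\binom{n}{k}=w_k q^{-k}$, and dividing by $\Pr(|Y|=k)=w_k$ makes explicit the conditioning the paper leaves implicit. One small slip: $w_k>0$ for all $k$ holds only for $d\in(0,1)$, not $d\in[0,1)$, since at $d=0$ you have $d^{n-k}=0$ for $k<n$; your final paragraph in fact handles this case correctly.
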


\begin{proof}
For any $y\in\Sigma_q^k$, using Lemma~\ref{lem:factorization}
and the column-sum identity~\eqref{eq:colsum}:
\(
  \Pr(Y=y)
  =q^{-n}\sum_{x\in\Sigma_q^n}N_n(x,y)\,\sigma_{k,n}\\
  =q^{-n}\cdot d^{n-k}(1-d)^k\cdot q^{n-k}\tbinom{n}{k}
  =w_k\cdot q^{-k},
\)
which is independent of $y$.
\end{proof}

\subsection{Exact Output Entropy}

\begin{lemma}[Exact output entropy under uniform input]
\label{thm:HY}
\begin{equation}
  H(Y)=n(1-d)\log_2 q+H_{\mathrm{Bin}}(n,1-d).
  \label{eq:HY}
\end{equation}
\end{lemma}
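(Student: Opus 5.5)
The plan is to compute $H(Y)$ directly from the explicit output law of Lemma~\ref{lem:uniform}. Under uniform input, every $y\in\Sigma_q^k$ has $\Pr(Y=y)=w_k q^{-k}$ with $w_k=\binom{n}{k}d^{n-k}(1-d)^k$. The output alphabet is the disjoint union $\bigcup_{k=0}^n\Sigma_q^k$, so I will split the entropy sum by layer $k$:
\begin{equation*}
H(Y)=-\sum_{k=0}^{n}\sum_{y\in\Sigma_q^k} w_k q^{-k}\log_2\bigl(w_k q^{-k}\bigr)
=-\sum_{k=0}^{n} w_k\log_2 w_k+\sum_{k=0}^{n} k\,w_k\log_2 q .
\end{equation*}
Here the inner sum over the $q^k$ words of layer $k$ just cancels the factor $q^{-k}$.

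Equivalently, one can observe that $K=|Y|$ is a deterministic function of $Y$, so $H(Y)=H(K)+H(Y\mid K)$. Lemma~\ref{lem:uniform} gives $H(Y\mid K=k)=k\log_2 q$, since $Y$ is uniform on $\Sigma_q^k$ given $K=k$. I will present whichever of the two forms reads more cleanly.

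Next I identify the two pieces. The weights $w_k$ are exactly the $\mathrm{Bin}(n,1-d)$ probability mass function evaluated at $k$, with $k$ counting survivors. Hence $-\sum_k w_k\log_2 w_k=H_{\mathrm{Bin}}(n,1-d)$ by the definition in the Notation paragraph. The second sum is $\log_2 q\cdot\mathbb{E}[K]=n(1-d)\log_2 q$. Adding the two gives \eqref{eq:HY}.

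No step is a genuine obstacle; the only care needed is bookkeeping. Two edge cases deserve a remark: the $k=0$ layer, where the single output $\varepsilon$ carries mass $w_0=d^n$, and the case $d=0$, where $w_k=0$ for $k<n$ and the convention $0\log_2 0=0$ handles the vanishing terms. I would also note explicitly that $\sum_k w_k=1$, by the binomial theorem, so the layer decomposition is a genuine probability decomposition.
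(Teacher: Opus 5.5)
Your proposal is correct and matches the paper's proof essentially step for step. The paper also splits $H(Y)$ by output length $k$ and uses $\Pr(Y=y)=w_kq^{-k}$ from Lemma~\ref{lem:uniform} to get a layer contribution of $w_k[k\log_2 q-\log_2 w_k]$. It then identifies $\sum_k kw_k=n(1-d)$ and $-\sum_k w_k\log_2 w_k=H_{\mathrm{Bin}}(n,1-d)$, as you do; your chain-rule phrasing $H(Y)=H(K)+H(Y\mid K)$ is just a repackaging of the same computation.
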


\begin{proof}
From Lemma~\ref{lem:uniform}, each word of length $k$ has probability
$w_k q^{-k}$.  The contribution to $H(Y)$ from length-$k$ words is
$q^k\cdot\bigl(-\frac{w_k}{q^k}\log_2\frac{w_k}{q^k}\bigr)
=w_k[k\log_2 q-\log_2 w_k]$.
Summing over $k$, $\sum_k w_k=1$ and
$\sum_k kw_k=n(1-d)$:
\(
H(Y)=n(1-d)\log_2 q-\sum_k w_k\log_2 w_k
=n(1-d)\log_2 q+H_{\mathrm{Bin}}(n,1-d).
\)
\end{proof}

\subsection{Finite-Block Capacity Sandwich}

\begin{theorem}[Finite-block sandwich]
\label{thm:sandwich}
For $n\ge 1$, $d\in(0,1)$:
\begin{equation}
  (1-d)\log_2 q-h_2(d) \;\le\;C_{q,n} (d)\;\le\; (1-d)\log_2 q.
  \label{eq:sandwich}
\end{equation}
We denote by $C_{q,n}$ the per-symbol capacity of the $q$-ary 
deletion channel with block length $n$, and by $C_q$ its 
limit as $n\to\infty$, which satisfies
$C_q\le (1-d)\log_2 q$.
\end{theorem}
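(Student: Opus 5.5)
The plan is to prove the two inequalities in \eqref{eq:sandwich} separately, with $C_{q,n}(d)\triangleq\frac1n\max_{P_X}I(X;Y)$ over distributions on $\Sigma_q^n$. Both bounds rest on one device. Let $S\subseteq[n]$ be the random deletion set, so that $S$ is independent of $X$ with $\Pr(S=s)=d^{|s|}(1-d)^{n-|s|}$, and $Y=X_{\setminus S}$ is a deterministic function of $(X,S)$.

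\emph{Upper bound (genie argument).} Reveal $S$ to the receiver. For any input law, data processing gives
\[
I(X;Y)\le I(X;Y,S)=I(X;S)+I(X;Y\mid S).
\]
The first term vanishes by independence. Given $S=s$, $Y$ is the word $x_{\setminus s}$ of length $n-|s|$, so $I(X;Y\mid S=s)\le H(Y\mid S=s)\le (n-|s|)\log_2 q$. Averaging over $S$ and using $\mathbb{E}[n-|S|]=n(1-d)$ yields $I(X;Y)\le n(1-d)\log_2 q$ for every $P_X$. Dividing by $n$ gives $C_{q,n}(d)\le(1-d)\log_2 q$. This bound is uniform in $n$, so it passes to $C_q=\lim_n C_{q,n}$ (or to a $\limsup$, if the existence of the limit is not being claimed here).

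\emph{Lower bound (uniform input).} Take $X\sim\mathrm{Unif}(\Sigma_q^n)$, so that $C_{q,n}\ge\frac1n I_U(X;Y)=\frac1n[H(Y)-H(Y\mid X)]$. Lemma~\ref{thm:HY} gives $H(Y)=n(1-d)\log_2 q+H_{\mathrm{Bin}}(n,1-d)$ exactly. For the conditional entropy, $Y$ is a function of $(X,S)$, so
\[
H(Y\mid X)\le H(Y,S\mid X)=H(S\mid X)=H(S)=n\,h_2(d).
\]
Here the last equality holds because $S$ has i.i.d.\ Bernoulli$(d)$ coordinates. Combining these and dropping the nonnegative term $H_{\mathrm{Bin}}(n,1-d)\ge 0$ gives $\frac1n I_U(X;Y)\ge(1-d)\log_2 q-h_2(d)$. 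This is the left inequality of \eqref{eq:sandwich}.

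The step needing the most care is the bound on $H(Y\mid X)$. It requires using the independence of $S$ from $X$ to replace $H(S\mid X)$ by $H(S)$, and noting that the inequality direction is $H(Y\mid X)\le H(S)$, i.e.\ that several deletion sets collapsing to the same $y$ (captured by $N_n(x,y)>1$) only lowers $H(Y\mid X)$. I would also remark that the slack dropped here, $nh_2(d)-H(Y\mid X)$, together with the discarded $H_{\mathrm{Bin}}$ term, is exactly what the later refinement $\Delta_n(d)$ recovers. Everything else follows directly from Lemma~\ref{thm:HY} and the factorization in Lemma~\ref{lem:factorization}.
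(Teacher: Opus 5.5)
Your proof is correct, and the lower bound matches the paper's: uniform input, $H(Y)\ge n(1-d)\log_2 q$ from Lemma~\ref{thm:HY}, and $H(Y\mid X)\le H(S)=n\,h_2(d)$ because $Y$ is a function of $(X,S)$ and $S$ is independent of $X$. The paper phrases this with the deletion indicators $\mathbf{D}$, which is the same object.

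The upper bound takes a different route. You give the receiver the whole deletion set and use $I(X;Y)\le I(X;Y,S)=I(X;Y\mid S)\le \mathbb{E}[n-|S|]\log_2 q$, the standard erasure-channel converse. The paper conditions only on the output length $K=|Y|$, which the receiver already sees. It writes $H(Y)=H(K)+H(Y\mid K)\le H(K)+n(1-d)\log_2 q$ and $H(Y\mid X)\ge H(K\mid X)=H(K)$, so the $H(K)$ terms cancel. In mutual-information form this reads $I(X;Y)=I(X;Y,K)=I(X;Y\mid K)\le H(Y\mid K)$, with no data-processing step.

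What each buys:
\begin{itemize}
\item Your version makes explicit that the deletion channel is a degraded version of the $q$-ary erasure channel, so the bound is simply the erasure capacity.
\item The paper's version stays inside the same $H(Y)-H(Y\mid X)$ accounting used for the lower bound and never needs an unobserved side variable.
\end{itemize}
Both give $(1-d)\log_2 q$ uniformly in $n$, so the claim about $C_q$ follows either way.

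One minor correction to your closing remark. By Theorem~\ref{thm:tighter_HYX}, the slack $n\,h_2(d)-H(Y\mid X)$ alone equals $\Delta_n(d)$. The dropped $H_{\mathrm{Bin}}(n,1-d)$ term is restored separately in Theorem~\ref{thm:exact_uniform_rate}, not through $\Delta_n(d)$.
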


\begin{proof}
\textit{Lower bound.}
It suffices to lower-bound $\frac{1}{n}I_U(X;Y)$, which means mutual information under uniform input distribution.
By Lemma~\ref{thm:HY}, $H_U(Y)\ge n(1-d)\log_2 q$.
For any fixed $x$, the output $Y$ is a deterministic function of the i.i.d.\ deletion
pattern $\mathbf{D}=(D_1,\ldots,D_n)$, $D_j\sim\mathrm{Bern}(d)$,
so,
$H(Y\mid X=x)\le H(\mathbf{D}\mid X=x) \le H(\mathbf{D})=n\,h_2(d)$.
Hence
\(
  I_U(X;Y)=H_U(Y)-H_U(Y\mid X)
  \ge n(1-d)\log_2 q-n\,h_2(d).
\)
Dividing by $n$ gives the lower bound.

\textit{Upper bound.}
The output length $K\triangleq |Y|\sim\mathrm{Bin}(n,1-d)$ regardless of $p_X$,
so $H(K)=H_{\mathrm{Bin}}(n,d)$.  Since $H(Y\mid K=k)\le k\log_2 q$:
\(
H(Y)=H(K)+H(Y\mid K)\le H_{\mathrm{Bin}}(n,d)+n(1-d)\log_2 q.
\)
Since deletions are independent of $X$, we have $H(Y\mid X)\ge H(K)$,
so
\(
I(X;Y)=H(Y)-H(Y\mid X)\le n(1-d)\log_2 q.
\)
This holds for any $p_X$, giving $C_{q,n}\le(1-d)\log_2 q$ and,
as $n\to\infty$, $C_q\le(1-d)\log_2 q$.
\end{proof}

\begin{theorem}[Leading-order small-$d$ lower bound]
\label{thm:smalld}
For all $q\ge 2$, as $d\to 0^+$:
\begin{equation}
  C_q(d)\;\ge\;\log_2 q+d\log_2 d+O(d).
  \label{eq:smalld}
\end{equation}
\end{theorem}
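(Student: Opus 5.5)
We derive the bound from the lower half of the finite-block sandwich in Theorem~\ref{thm:sandwich}, then expand in $d$. The key step is to pass the finite-$n$ lower bound to the limit $C_q(d)$. For each $n$, Theorem~\ref{thm:sandwich} gives $C_{q,n}(d)\ge(1-d)\log_2 q-h_2(d)$, and the right-hand side does not depend on $n$. We take $C_q(d)$ to be the limit (or, if only the $\liminf$ is guaranteed, at least the $\liminf$) of $C_{q,n}(d)$ as $n\to\infty$, following the definition attached to Theorem~\ref{thm:sandwich}. Then the uniform bound carries over:
\[
C_q(d)\;\ge\;(1-d)\log_2 q-h_2(d)\qquad\text{for all } d\in(0,1).
\]
If one instead wants $C_q=\sup_n C_{q,n}$ or the standard coding-theoretic capacity, the same inequality holds. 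The reason is that $\frac1n I_U(X;Y)$ is an achievable rate for the block channel used repeatedly: concatenating length-$n$ blocks with an i.i.d.\ uniform input is a legitimate (if suboptimal) scheme. I will phrase the step through the limit definition the paper adopts.

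Next we expand the right-hand side as $d\to0^+$. Write
\[
h_2(d)=-d\log_2 d-(1-d)\log_2(1-d).
\]
Since $-(1-d)\log_2(1-d)=d\log_2 e+O(d^2)=O(d)$, we get $h_2(d)=-d\log_2 d+O(d)$. Also $(1-d)\log_2 q=\log_2 q-d\log_2 q$, and the term $d\log_2 q$ is $O(d)$ for each fixed $q$. Combining these,
\[
(1-d)\log_2 q-h_2(d)=\log_2 q+d\log_2 d+O(d),
\]
where the implied constant can be taken as $\log_2 q+\log_2 e$ for small $d$. This is exactly \eqref{eq:smalld}, valid uniformly in the sense that the derivation never used $q=2$.

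The main obstacle is not the asymptotics, which are routine, but justifying the passage from $C_{q,n}$ to $C_q$. The concern is whether $C_q$ is defined as a limit that exists. To settle this, I would note that the lower bound is independent of $n$, so it survives any of $\lim$, $\liminf$, or $\sup$. If the existence of the limit must be addressed explicitly, I would invoke the standard superadditivity-type argument for the deletion channel, as in the literature cited in the introduction, or simply state the result for $\liminf_n C_{q,n}$. Either way, the inequality in \eqref{eq:smalld} follows.
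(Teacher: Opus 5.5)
Your proposal is correct and matches the paper's proof: the $n$-independent lower bound $(1-d)\log_2 q-h_2(d)$ from Theorem~\ref{thm:sandwich} passes to $C_q$, and expanding $h_2(d)=-d\log_2 d+d\log_2 e+O(d^2)$ gives the linear correction $-d\log_2(qe)=O(d)$; your explicit handling of the limit fills in a step the paper leaves implicit. One aside is wrong, though nothing depends on it: concatenating length-$n$ blocks over the deletion channel does not reproduce the block channel used repeatedly, because the output carries no block boundaries, so $\frac1n I_U(X;Y)$ at fixed $n$ is not in general an achievable rate; for the operational capacity, cite Dobrushin's theorem $C_q=\lim_n C_{q,n}$ instead and let your $n$-independent bound pass through the limit.
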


\begin{proof}
Expanding $h_2(d)=-d\log_2 d+d\log_2 e+O(d^2)$, the lower bound in
Theorem~\ref{thm:sandwich} gives
\(
C_q(d)\ge\log_2 q+d\log_2 d-d\log_2(qe)+O(d^2\log(1/d)),
\)
so $C_q(d)-\log_2 q\ge d\log_2 d+O(d)$ as claimed.
\end{proof}

\begin{remark}
The upper bound \(C_q(d)\le (1-d)\log_2 q\) is classical and follows from the
trivial erasure-channel converse; see, for example,~\cite{mitzenmacher2009,rahmati2014}.
For lower bounds, Diggavi and Grossglauser's early deletion-channel work~\cite{diggavi2001}
introduced a framework that has been widely used in later analyses.
In the $q$-ary regime, the lower bounds obtained from that line of work imply
\(
C_q(d)\ge \log_2 \tfrac{q}{q-1}+(1-d)\log_2 (q-1)-h(d)
\), which is better than \(C_q(d)\ge (1-d)\log_2 q-h(d)\).
The bound \((1-d)\log_2 q-h_2(d)\) in this paper is a refined achievable-rate
expression obtained from the uniform-input entropy calculation.
\end{remark}

\begin{remark}
The upper bound $C_q(d)\le(1-d)\log_2 q$ is classical.
In the binary case ($q=2$), the lower bound $(1-d)-h_2(d)$
dominates the constant-factor bounds $0.1185(1-d)$~\cite{drinea2007}
and $0.1221(1-d)$~\cite{rubinstein2023} only in the low-deletion
regime: specifically, $1-d-h_2(d)>0.1221(1-d)$ holds for
$d\lesssim 0.193258$.
\end{remark}

\section{Main Results 2: Tightening via the Entropy Correction $\Phi$}
\label{sec:tighter}

The lower bound in Theorem~\ref{thm:sandwich} bounds $H(Y\mid X)$ by
$n\,h_2(d)$, treating each deletion event as maximally uncertain.
But when $N_n(x,y)\ge 2$, multiple deletion subsets map the same
input to the same output~--- the conditional distribution is more
concentrated than the bound assumes.  We now quantify this
concentration through a single correction term derived from the
integer values of $N_n$.

\begin{definition}[Entropy correction]
\label{def:Phi}
For $0\le k\le n$ and $x\in\Sigma_q^n$, define the
\emph{per-input correction}
\begin{equation}
  \phi_{k,n}(x)\triangleq \frac{1}{\binom{n}{k}}
  \sum_{y\in\Sigma_q^k}N_n(x,y)\log_2 N_n(x,y)\ge 0,
  \label{eq:phi}
\end{equation}
and the \emph{average correction}
\begin{equation}
  \Phi_{k,n}\triangleq \frac{1}{q^n\binom{n}{k}}
  \sum_{x\in\Sigma_q^n}\sum_{y\in\Sigma_q^k}
  N_n(x,y)\log_2 N_n(x,y)\ge 0.
  \label{eq:Phi}
\end{equation}
Set $\Delta_n(d)\triangleq \sum_{k=1}^{n-1}w_k\Phi_{k,n}\ge 0$.
\end{definition}

Note that $\phi_{k,n}(x)=0$ if and only if every entry $N_n(x,y)
\in\{0,1\}$, which holds at the boundary levels $k=0$ and $k=n$
but not for $1\le k\le n-1$ when $n\ge 2$ (see Example~\ref{ex:phi}).

\begin{example}[$n=2$, $q=2$: correction terms]
\label{ex:phi}
From Example~\ref{ex:n2q2}, the only non-trivial level is $k=1$.
Using $\binom{2}{1}=2$:
\(
\phi_{1,2}(00)=\tfrac{1}{2}[2\log_2 2+0]=1,
\phi_{1,2}(01)=\tfrac{1}{2}[1\cdot 0+1\cdot 0]=0,
\phi_{1,2}(10)=0,
\phi_{1,2}(11)=1.
\)
$\phi_{1,2}(x)>0$ precisely for the constant strings $x\in\{00,11\}$:
deleting either symbol always yields the same output, so the length-$1$
output is more concentrated.  The average is
$\Phi_{1,2}=\frac{2\log_2 2+2\log_2 2}{4\cdot 2}=\frac{1}{2}$.
With $w_1=2d(1-d)$:
\(
\Delta_2(d)=w_1\cdot\Phi_{1,2}=d(1-d)>0\quad\forall\,d\in(0,1).
\)
\end{example}

\begin{theorem}[Tightened capacity lower bound]
\label{thm:tighter_HYX}
For any fixed $x\in\Sigma_q^n$:
\begin{equation}
  H(Y\mid X=x)= n\,h_2(d)-\sum_{k=0}^n w_k\phi_{k,n}(x).
\end{equation}
Averaging over the uniform input $X\sim\mathrm{Unif}(\Sigma_q^n)$:
\begin{equation}
  H(Y\mid X)= n\,h_2(d)-\Delta_n(d).
  \label{eq:tighter_HYX}
\end{equation}
\end{theorem}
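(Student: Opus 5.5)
We compute $H(Y\mid X=x)$ directly from the factorization in Lemma~\ref{lem:factorization}. For fixed $x$, the conditional law of $Y$ is $\Pr(Y=y\mid X=x)=N_n(x,y)\,\sigma_{k,n}$ for $y\in\Sigma_q^k$. We restrict all sums to pairs with $N_n(x,y)\ge 1$, using $0\log_2 0=0$. The entropy then splits as
\[
H(Y\mid X=x)=-\sum_{k=0}^n\sum_{y\in\Sigma_q^k}N_n(x,y)\sigma_{k,n}\bigl[\log_2 N_n(x,y)+\log_2\sigma_{k,n}\bigr].
\]

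The second piece is handled by the row-sum identity (Lemma~\ref{lem:rowsum}): $\sum_{y}N_n(x,y)=\binom{n}{k}$. So at level $k$ it contributes $-\binom{n}{k}\sigma_{k,n}\log_2\sigma_{k,n}=-w_k\log_2\sigma_{k,n}$. We expand $\log_2\sigma_{k,n}=(n-k)\log_2 d+k\log_2(1-d)$ and use $\sum_k w_k k=n(1-d)$ and $\sum_k w_k(n-k)=nd$. The sum over $k$ is therefore exactly $-nd\log_2 d-n(1-d)\log_2(1-d)=n\,h_2(d)$. This is just the entropy of the i.i.d.\ deletion pattern, which is the right sanity check.

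The first piece at level $k$ is $-\sigma_{k,n}\sum_y N_n\log_2 N_n$. By Definition~\ref{def:Phi} this equals $-\sigma_{k,n}\binom{n}{k}\phi_{k,n}(x)=-w_k\phi_{k,n}(x)$. Summing over $k$ gives the first identity.

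For the averaged statement, we take $\frac{1}{q^n}\sum_x$ of both sides. Since $\phi_{k,n}$ enters linearly, this turns $\phi_{k,n}(x)$ into $\Phi_{k,n}$ by \eqref{eq:Phi}, giving $H(Y\mid X)=n\,h_2(d)-\sum_{k=0}^n w_k\Phi_{k,n}$. It remains to drop the boundary terms $k=0$ and $k=n$ so that the sum matches $\Delta_n(d)$. At $k=0$ we have $N_n(x,\varepsilon)=1$, and at $k=n$ we have $N_n(x,y)=\mathbf 1_{\{y=x\}}$. In both cases every $N\log_2 N$ vanishes, so $\Phi_{0,n}=\Phi_{n,n}=0$, and \eqref{eq:tighter_HYX} follows.

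The only delicate point is the bookkeeping: keeping zero entries out of the logarithm, and checking that the factor $\binom{n}{k}$ from the row-sum identity matches the normalization in \eqref{eq:phi}, so that $\sigma_{k,n}\binom{n}{k}$ becomes $w_k$. There is no genuine analytic obstacle.
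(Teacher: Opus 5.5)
Your proof is correct. It reaches the result by a slightly different decomposition than the paper's.

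The paper conditions on the output length $K=|Y|$. It observes that $\Pr(Y=y\mid X=x,K=k)=N_n(x,y)/\binom{n}{k}$, so $H(Y\mid X=x,K=k)=\log_2\binom{n}{k}-\phi_{k,n}(x)$. It then applies the chain rule $H(Y\mid X=x)=H(K\mid X=x)+H(Y\mid X=x,K)$, which is valid since $K$ is a function of $Y$. Independence of $K$ and $X$ lets it replace $H(K\mid X=x)$ by $H(K)$. Finally it invokes the identity $H(K)+\sum_k w_k\log_2\binom{n}{k}=n\,h_2(d)$.

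You instead split $-\log_2\bigl(N_n(x,y)\sigma_{k,n}\bigr)$ directly. The $\sigma$-part then collapses to $n\,h_2(d)$ through the row-sum identity and the linearity of $\log_2\sigma_{k,n}$ in $k$. The two computations agree algebraically, since $\log_2 w_k=\log_2\binom{n}{k}+\log_2\sigma_{k,n}$. Your second-piece computation is therefore exactly a proof of the identity that the paper states without justification. Your version is the more self-contained one and needs neither the chain rule nor the independence argument.

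What the paper's route buys is interpretation. It exhibits $\phi_{k,n}(x)$, level by level, as the entropy lost when the uniformly distributed deletion pattern of size $n-k$ (entropy $\log_2\binom{n}{k}$) is collapsed to the output. That is what motivates calling it a correction to $n\,h_2(d)$.

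Your treatment of the averaging step and of the boundary levels $k=0$ and $k=n$, where $\Phi_{0,n}=\Phi_{n,n}=0$, matches the paper's.
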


\begin{proof}
Fix $x$.  Given $K=k$, the conditional distribution is
$\Pr(Y=y\mid X=x,K=k)=N_n(x,y)/\binom{n}{k}$, so
\begin{align}
  H(Y\mid & X=x,K=k) \nonumber \\
  &=\log_2\binom{n}{k}
   -\frac{1}{\binom{n}{k}}\sum_y N_n(x,y)\log_2 N_n(x,y) \nonumber \\
  &=\log_2\binom{n}{k}-\phi_{k,n}(x).
\end{align}
Since $K\sim\mathrm{Bin}(n,1-d)$ is independent of $X$, 
so after averaging over $K$, 
\(
H(Y \mid X = x) 
= H(K \mid X = x) + H(Y \mid X = x, K),
\)
we have $H(K \mid X = x) = H(K)$, so:
\(
H(Y \mid X = x) = H(K) + \sum_{k=0}^{n} w_k H(Y \mid X = x, K = k)
\)
which gives
\(
H(Y \mid X = x) = H(K)
+ \sum_{k=0}^{n} w_k \log_2\! \binom{n}{k}
- \sum_{k=0}^{n} w_k \,\phi_{k,n}(x),
\)
and using
the identity $H(K)+\sum_k w_k\log_2\binom{n}{k}
=n\,h_2(d)$:
\(
  H(Y\mid X=x)= n\,h_2(d)-\sum_{k=0}^n w_k\phi_{k,n}(x).
\)
Averaging over $p_X$ and using $\phi_{0,n}(x)=\phi_{n,n}(x)=0$
gives~\eqref{eq:tighter_HYX}.
\end{proof}

\begin{example}[Tightened bound with exact equality at $n=2$, $q=2$]
\label{ex:equality}
We verify that the tightened bound holds with equality under the
uniform input.  Since $K=|Y|\sim\mathrm{Bin}(2,1-d)$ is independent
of $X$, and only $k=1$ contributes non-trivially (the $k=0$ and $k=2$
outputs are deterministic given $x$):
\(
H(Y\mid X=x)=H_{\mathrm{Bin}}(2,d)+2d(1-d)[1-\Phi_{1,2}(x)].
\)
Averaging uniformly:
\(
H(Y\mid X)
  =H_{\mathrm{Bin}}(2,d)+2d(1-d)\bigl[1-\tfrac{1}{4}(1+0+0+1)\bigr]
  =H_{\mathrm{Bin}}(2,d)+d(1-d).
\)
Using $H_{\mathrm{Bin}}(2,d)=2h_2(d)-2d(1-d)$ gives 
\(
H(Y\mid X)=2h_2(d)-d(1-d)=2h_2(d)-\Delta_2(d).
\)
The gap relative to the loose bound $2h_2(d)$ is exactly
$\Delta_2(d)=d(1-d)>0$ for all $d\in(0,1)$.
\end{example}

\begin{theorem}[Exact uniform-input rate]
\label{thm:exact_uniform_rate}
Under the uniform input $X\sim\mathrm{Unif}(\Sigma_q^n)$,
\begin{align}
  \frac{1}{n}I_U(X;Y)
  = (1-d)\log_2 q
    + \frac{1}{n}H_{\mathrm{Bin}}(n,1-d) \nonumber \\
    - h_2(d)
    + \frac{\Delta_n(d)}{n}.
  \label{eq:exact_uniform_rate}
\end{align}
\end{theorem}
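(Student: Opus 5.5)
The plan is to write $I_U(X;Y)=H_U(Y)-H_U(Y\mid X)$ and substitute the two exact expressions already established. Lemma~\ref{thm:HY} gives $H_U(Y)=n(1-d)\log_2 q+H_{\mathrm{Bin}}(n,1-d)$. The averaged form \eqref{eq:tighter_HYX} of Theorem~\ref{thm:tighter_HYX} gives $H_U(Y\mid X)=n\,h_2(d)-\Delta_n(d)$. Subtracting and dividing by $n$ yields \eqref{eq:exact_uniform_rate} directly.

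Since the argument is a two-line substitution, the work lies in making sure the two ingredients are genuinely exact, not bounds, under the uniform input. For $H_U(Y)$ this holds because Lemma~\ref{lem:uniform} gives the exact output law $\Pr(Y=y)=w_k q^{-k}$.

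For $H_U(Y\mid X)$ I would recheck the identity used inside Theorem~\ref{thm:tighter_HYX}:
\[
H(K)+\sum_k w_k\log_2\tbinom{n}{k}=n\,h_2(d).
\]
This follows because the deletion pattern $\mathbf{D}$ is uniform given $K=k$ over $\binom{n}{k}$ patterns. Hence $H(\mathbf{D})=H(K)+H(\mathbf{D}\mid K)$, with $H(\mathbf{D}\mid K)=\sum_k w_k\log_2\binom{n}{k}$ and $H(\mathbf{D})=n\,h_2(d)$.

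I would also confirm that averaging $\sum_k w_k\phi_{k,n}(x)$ over uniform $x$ gives $\sum_k w_k\Phi_{k,n}$. This holds by the definition \eqref{eq:Phi}, since $\Phi_{k,n}=q^{-n}\sum_x\phi_{k,n}(x)$. The boundary terms $k=0,n$ vanish because $N_n\in\{0,1\}$ there, so the average equals $\Delta_n(d)$.

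The main obstacle is a notational subtlety rather than a mathematical one. Lemma~\ref{thm:HY} is stated with $H_{\mathrm{Bin}}(n,1-d)$, whereas the proof of Theorem~\ref{thm:sandwich} writes $H(K)=H_{\mathrm{Bin}}(n,d)$. These are equal by the symmetry $k\mapsto n-k$ of the binomial law, so either form can be used in \eqref{eq:exact_uniform_rate}. As a sanity check, I would compare with Example~\ref{ex:equality} at $n=2$, $q=2$.
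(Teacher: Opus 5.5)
Your proposal is correct and matches the paper's proof: the paper likewise substitutes $H_U(Y)=n(1-d)\log_2 q+H_{\mathrm{Bin}}(n,1-d)$ from Lemma~\ref{thm:HY} and $H_U(Y\mid X)=n\,h_2(d)-\Delta_n(d)$ from Theorem~\ref{thm:tighter_HYX} into $I_U(X;Y)=H_U(Y)-H_U(Y\mid X)$, then divides by $n$. Your extra checks are all valid and simply re-verify ingredients the paper establishes earlier: the identity $H(K)+\sum_k w_k\log_2\binom{n}{k}=n\,h_2(d)$ via $H(\mathbf{D})=H(K)+H(\mathbf{D}\mid K)$, the averaging $\Phi_{k,n}=q^{-n}\sum_x\phi_{k,n}(x)$ with vanishing boundary levels, and the symmetry $H_{\mathrm{Bin}}(n,d)=H_{\mathrm{Bin}}(n,1-d)$.
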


\begin{proof}
By Lemma~\ref{thm:HY},
\(
H(Y)=n(1-d)\log_2 q+H_{\mathrm{Bin}}(n,1-d).
\)
By Theorem~\ref{thm:tighter_HYX},
\(
H(Y\mid X)=n\,h_2(d)-\Delta_n(d).
\)
Subtracting gives~\eqref{eq:exact_uniform_rate}.
\end{proof}

\begin{corollary}[Exact uniform-input lower bound]
\label{cor:exact_uniform_lb}
For $n\ge 2$ and $d\in(0,1)$:
\begin{align}
  C_{q,n}\ge (1-d)\log_2 q
    + \frac{1}{n}H_{\mathrm{Bin}}(n,1-d)
    - h_2(d)
    + \frac{\Delta_n(d)}{n}.
  \label{eq:tighter_lb}
\end{align}
\end{corollary}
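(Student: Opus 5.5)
The plan is to derive Corollary~\ref{cor:exact_uniform_lb} directly from Theorem~\ref{thm:exact_uniform_rate}. By definition, $C_{q,n}$ is the per-symbol capacity at block length $n$:
\[
C_{q,n}=\tfrac{1}{n}\max_{p_X}I(X;Y).
\]
The uniform distribution on $\Sigma_q^n$ is one admissible choice of $p_X$. Hence $C_{q,n}\ge \frac{1}{n}I_U(X;Y)$, and substituting the exact expression~\eqref{eq:exact_uniform_rate} gives~\eqref{eq:tighter_lb} immediately. The hypotheses $n\ge 2$ and $d\in(0,1)$ are only needed so that the terms are well defined and nontrivial. Theorem~\ref{thm:exact_uniform_rate} itself holds for every $n$.

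The one step that deserves care is making sure the ingredients of Theorem~\ref{thm:exact_uniform_rate} are sound, since the corollary inherits them.

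\textbf{Output entropy.} Lemma~\ref{thm:HY} gives $H(Y)$ through the layerwise-uniform output of Lemma~\ref{lem:uniform}, which rests on the column-sum identity~\eqref{eq:colsum}.

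\textbf{Conditional entropy.} Theorem~\ref{thm:tighter_HYX} uses three facts.
\begin{itemize}
\item Conditional on $X=x$ and $K=k$, the deletion set is uniform over the $\binom{n}{k}$ subsets of size $n-k$. Hence $\Pr(Y=y\mid x,k)=N_n(x,y)/\binom{n}{k}$, with normalization guaranteed by the row-sum identity~\eqref{eq:rowsum}.
\item $K$ is a function of $Y$, so $H(Y\mid X=x)=H(K)+\sum_k w_k H(Y\mid X=x,K=k)$.
\item The identity $H(K)+\sum_k w_k\log_2\binom{n}{k}=n\,h_2(d)$ holds. This is the chain rule applied to the i.i.d.\ deletion pattern $\mathbf D$: $\mathbf D$ determines $K$, and given $K=k$ it is uniform over $\binom{n}{k}$ patterns.
\end{itemize}

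\textbf{Boundary layers.} The $k=0$ and $k=n$ layers contribute nothing to $\Delta_n(d)$, because $N_n\in\{0,1\}$ there. This is why $\Delta_n(d)$ is summed only over $1\le k\le n-1$.

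I expect the main obstacle to be bookkeeping rather than substance. Note that $H_{\mathrm{Bin}}(n,1-d)=H_{\mathrm{Bin}}(n,d)$, so the term $\frac{1}{n}H_{\mathrm{Bin}}(n,1-d)$ in~\eqref{eq:tighter_lb} can be written either way. The corollary should then be stated as the maximum over inputs dominating the uniform-input value.

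As a consequence, I would also note the simpler certified bound stated in the abstract. Since $H_{\mathrm{Bin}}\ge 0$, dropping that term yields
\[
C_{q,n}\ge (1-d)\log_2 q-h_2(d)+\frac{\Delta_n(d)}{n}.
\]
To see that this strictly improves the sandwich bound~\eqref{eq:sandwich}, it suffices that $\Delta_n(d)>0$. This holds because, for $n\ge 2$ and $1\le k\le n-1$, the constant word $x=0^n$ has $N_n(0^n,0^k)=\binom{n}{k}\ge 2$. Hence $\Phi_{k,n}>0$, and $w_k>0$ for $d\in(0,1)$.
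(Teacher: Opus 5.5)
Your proposal is correct and matches the paper's proof. The uniform distribution on $\Sigma_q^n$ is one admissible input, so $C_{q,n}\ge\frac{1}{n}I_U(X;Y)$, and Theorem~\ref{thm:exact_uniform_rate} supplies the exact value. Your re-check of the ingredients and your remark that $\Delta_n(d)>0$ (via the constant word $0^n$) are sound but not needed for the corollary itself.
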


\begin{proof}
By Theorem~\ref{thm:exact_uniform_rate}, under the uniform input 
$X\sim\mathrm{Unif}(\Sigma_q^n)$, ~\eqref{eq:exact_uniform_rate}.
Since $C_{q,n}$ is the per-symbol capacity, it is at least the mutual
information achieved by any particular input distribution. Therefore,
\(
C_{q,n}\ge \frac{1}{n}I_U(X;Y),
\)
which gives~\eqref{eq:tighter_lb}.
\end{proof}

\begin{remark}[Comparison with the known lower bound~\cite{diggavi2001}]
\label{rem:B2_compare}
Let
\begin{align}
LB_2(q,d)\triangleq \log_2\frac{q}{q-1}+(1-d)\log_2(q-1)-h_2(d) \notag \\
        =\log_2 q-d\log_2(q-1)-h_2(d).
        \label{eq:B2}
\end{align}
Then the exact uniform-input rate satisfies
\(
\frac{1}{n}I_U  (X;Y)  -LB_2(q,d)
=\frac{1}{n}H_{\mathrm{Bin}}(n,1-d)+\frac{\Delta_n(d)}{n}
-d\log_2\frac{q}{q-1}.
\)
Hence, for every fixed $q\ge 2$ and $n\ge 2$, the exact uniform-input
rate exceeds $LB_2(q,d)$ for all sufficiently small $d>0$.
\end{remark}

\begin{lemma}[Recursive lower bound on $\Phi_{k,n}$]
\label{lem:Phi_rec}
For $1\le k\le n-1$:
\begin{equation}
  \Phi_{k,n}\ge\frac{n-k}{n}\Phi_{k,n-1}+\frac{k}{n}\Phi_{k-1,n-1}.
  \label{eq:Phi_rec}
\end{equation}
\end{lemma}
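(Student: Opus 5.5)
The plan is to read $\Phi_{k,n}$ probabilistically and then push the recursion of Proposition~\ref{thm:recursion} through a conditioning on the fate of the first position. Let $X\sim\mathrm{Unif}(\Sigma_q^n)$ and let $S$ be a uniformly random $(n-k)$-subset of $[n]$, independent of $X$. Put $Y=X_{\setminus S}$. By the row-sum identity (Lemma~\ref{lem:rowsum}), for fixed $x$ we have $\Pr(Y=y\mid X=x)=N_n(x,y)/\binom{n}{k}$. Hence
\[
\Phi_{k,n}=\frac{1}{q^n}\sum_x\sum_y \frac{N_n(x,y)}{\binom{n}{k}}\log_2 N_n(x,y)=\mathbb{E}\bigl[\log_2 N_n(X,Y)\bigr].
\]
Every realized pair has $N_n(X,Y)\ge 1$, so all logarithms are finite and nonnegative. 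The same representation holds at block length $n-1$ for both $\Phi_{k,n-1}$ and $\Phi_{k-1,n-1}$.

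Next, write $X=sX'$, where $X'\sim\mathrm{Unif}(\Sigma_q^{n-1})$, and condition on whether $1\in S$. This happens with probability $(n-k)/n$.

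\emph{Case $1\in S$.} Here $S'=S\setminus\{1\}$, shifted to index $X'$, is a uniform $(n-k-1)$-subset of $[n-1]$ and is independent of $X'$. Then $Y=X'_{\setminus S'}$ has length $k$. Write $Y=tY''$. The recursion \eqref{eq:rec}, valid since $1\le k\le n-1$, gives $N_n(sX',Y)\ge N_{n-1}(X',Y)$ because the indicator term is nonnegative. The pair $(X',Y)$ is distributed exactly as in the representation of $\Phi_{k,n-1}$. So the conditional expectation of $\log_2 N_n(X,Y)$ is at least $\Phi_{k,n-1}$.

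\emph{Case $1\notin S$.} This has probability $k/n$. Now $S$ is a uniform $(n-k)$-subset of $\{2,\dots,n\}$, and $Y=sY'$ with $Y'=X'_{\setminus S}$ of length $k-1$. The first symbols agree, so the indicator in \eqref{eq:rec} equals $1$, and $N_n(sX',sY')\ge N_{n-1}(X',Y')$. Hence the conditional expectation is at least $\Phi_{k-1,n-1}$.

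Monotonicity of $\log_2$ together with the law of total expectation then yields \eqref{eq:Phi_rec}.

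The main point needing care is the boundary cases, where the recursion's hypotheses or the distributional claims might seem to degenerate. If $k=n-1$, the first case has $S'=\emptyset$, so $N_{n-1}(X',X')=1$, consistent with $\Phi_{n-1,n-1}=0$. If $k=1$, the second case has $Y'=\varepsilon$ and $N_{n-1}(X',\varepsilon)=1$, consistent with $\Phi_{0,n-1}=0$. In both cases the bound is trivially true. I would also verify explicitly that in each case the conditional law of $(X',\text{subset})$ is uniform and product-form, which holds by symmetry of a uniform random subset.
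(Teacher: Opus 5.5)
Your argument is correct and is essentially the paper's proof in probabilistic form: the paper applies super-additivity of $f(u)=u\log_2 u$ entrywise to the recursion~\eqref{eq:rec}, sums over $s,t,x',y'$, and normalizes using $\binom{n-1}{k}/\binom{n}{k}=(n-k)/n$ and $\binom{n-1}{k-1}/\binom{n}{k}=k/n$. Your branch-wise bound $\log_2 N_n\ge\log_2 N_{n-1}$, weighted by $\Pr(1\in S)$ and $\Pr(1\notin S)$, is exactly the inequality $(a+b)\log_2(a+b)\ge a\log_2 a+b\log_2 b$ behind that super-additivity, and the law of total expectation does the normalization for you.
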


\begin{proof}
Define the auxiliary quantity
\(
  S_{k,n}\triangleq\sum_{x\in\Sigma_q^n}\sum_{y\in\Sigma_q^k}
  N_n(x,y)\log_2 N_n(x,y),
\)
so that $\Phi_{k,n}=S_{k,n}/(q^n\binom{n}{k})$ by
Definition~\ref{def:Phi}.  It suffices to show
\begin{equation}
  S_{k,n}\ge q\,S_{k,n-1}+q\,S_{k-1,n-1}.
  \label{eq:Srec}
\end{equation}

\paragraph{Super-additivity of $f(u)=u\log_2 u$.}
We claim $f(a+b)\ge f(a)+f(b)$ for all $a,b\ge 0$.  The cases
$a=0$ or $b=0$ are trivial since $f(0)=0$.  For $a,b>0$:
\(
  f(a+b)-f(a)-f(b)
  =(a+b)\log_2(a+b)-a\log_2 a-b\log_2 b
  =a\log_2\!\Bigl(1+\frac{b}{a}\Bigr)
   +b\log_2\!\Bigl(1+\frac{a}{b}\Bigr)\;\ge\;0,
\)
since both terms are non-negative for $a,b>0$.

\paragraph{Applying super-additivity entrywise.}
From the recursion~\eqref{eq:rec},
for $x=sx'\in\Sigma_q^n$ and $y=ty'\in\Sigma_q^k$.
Super-additivity of $f$ gives, for every $s,t,x',y'$:
\begin{align}
  &f\bigl(N_n(sx',ty')\bigr) \nonumber\\
  &\ge f\bigl(N_{n-1}(x',ty')\bigr)
   +\mathbf{1}_{\{s=t\}}f\bigl(N_{n-1}(x',y')\bigr).
  \label{eq:fineq}
\end{align}

\paragraph{Summing to obtain~\eqref{eq:Srec}.}
Sum~\eqref{eq:fineq} over all $s,t\in\Sigma_q$,
$x'\in\Sigma_q^{n-1}$, $y'\in\Sigma_q^{k-1}$.  The left-hand side
gives $S_{k,n}$.  On the right-hand side:
\begin{itemize}
  \item \emph{First term.} Summing $f(N_{n-1}(x',ty'))$ over
        $s\in\Sigma_q$ contributes a factor of $q$ (since the summand
        is independent of $s$), and then summing over $t\in\Sigma_q$,
        $x'\in\Sigma_q^{n-1}$, $y'\in\Sigma_q^{k-1}$ gives
        $q\,S_{k,n-1}$.
  \item \emph{Second term.} The indicator $\mathbf{1}_{\{s=t\}}$
        contributes exactly one value of $s$ per $t$.  Summing over
        $t\in\Sigma_q$, $x'\in\Sigma_q^{n-1}$, $y'\in\Sigma_q^{k-1}$
        then gives $q\,S_{k-1,n-1}$.
\end{itemize}
Hence $S_{k,n}\ge q\,S_{k,n-1}+q\,S_{k-1,n-1}$.

\paragraph{Converting back to $\Phi$.}
Dividing both sides of~\eqref{eq:Srec} by $q^n\binom{n}{k}$:
\(
  \Phi_{k,n}
  \ge \frac{q\,S_{k,n-1}}{q^n\binom{n}{k}}
      +\frac{q\,S_{k-1,n-1}}{q^n\binom{n}{k}}
  = \frac{q^{n-1}\binom{n-1}{k}\,\Phi_{k,n-1}}{q^{n-1}\binom{n}{k}}
    +\frac{q^{n-1}\binom{n-1}{k-1}\,\Phi_{k-1,n-1}}{q^{n-1}\binom{n}{k}}
  = \frac{\binom{n-1}{k}}{\binom{n}{k}}\,\Phi_{k,n-1}
    +\frac{\binom{n-1}{k-1}}{\binom{n}{k}}\,\Phi_{k-1,n-1}
  = \frac{n-k}{n}\,\Phi_{k,n-1}+\frac{k}{n}\,\Phi_{k-1,n-1},
\)
where we used $S_{k,n-1}=q^{n-1}\binom{n-1}{k}\,\Phi_{k,n-1}$,
$S_{k-1,n-1}=q^{n-1}\binom{n-1}{k-1}\,\Phi_{k-1,n-1}$, and the
binomial ratios $\binom{n-1}{k}/\binom{n}{k}=(n-k)/n$,
$\binom{n-1}{k-1}/\binom{n}{k}=k/n$.
\end{proof}

\begin{theorem}[Closed-form lower bound for $\Phi_{k,n}$]
\label{thm:Phi_closed_form}
Assume $\Phi_{0,n}=0$ and $\Phi_{n,n}=0$ for all $n\ge 1$, and
\(
\Phi_{1,2}=\frac{1}{q}.
\)
Then for every $q\ge 2$, every $n\ge 2$, and every $1\le k\le n-1$,
\begin{equation}
  \Phi_{k,n}\ge \frac{2}{q}\,\frac{k(n-k)}{n(n-1)}.
  \label{eq:Phi_closed_form}
\end{equation}
\end{theorem}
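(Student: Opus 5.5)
We prove \eqref{eq:Phi_closed_form} by induction on $n$, using the recursive inequality of Lemma~\ref{lem:Phi_rec} as the inductive engine. Write $g_{k,n}\triangleq \frac{2}{q}\frac{k(n-k)}{n(n-1)}$ for the claimed lower bound; note that $g_{0,n}=g_{n,n}=0$, which matches the boundary assumptions $\Phi_{0,n}=\Phi_{n,n}=0$. The base case $n=2$ has only $k=1$, and there $g_{1,2}=\frac{2}{q}\cdot\frac{1}{2}=\frac1q=\Phi_{1,2}$. So the base case holds with equality by hypothesis. (For $q=2$ this is consistent with Example~\ref{ex:phi}, and in general $\Phi_{1,2}=\frac{1}{2q^2}\cdot q\cdot 2=\frac1q$ from the $q$ constant strings.)

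For the inductive step, fix $n\ge 3$ and $1\le k\le n-1$, and assume $\Phi_{j,n-1}\ge g_{j,n-1}$ for all $0\le j\le n-1$. For the interior indices this is the induction hypothesis, and at $j=0$ and $j=n-1$ it holds trivially since $g=0$ and $\Phi\ge0$. Lemma~\ref{lem:Phi_rec} then gives
\begin{equation*}
\Phi_{k,n}\ge \frac{n-k}{n}\,g_{k,n-1}+\frac{k}{n}\,g_{k-1,n-1}.
\end{equation*}
It remains to show that the right-hand side is at least $g_{k,n}$. Factoring out $\frac{2}{q}\cdot\frac{1}{n(n-1)(n-2)}$, the right-hand side becomes
\begin{equation*}
\frac{2}{q}\,\frac{(n-k)\,k(n-1-k)+k\,(k-1)(n-k)}{n(n-1)(n-2)}
=\frac{2}{q}\,\frac{k(n-k)(n-2)}{n(n-1)(n-2)}
=g_{k,n}.
\end{equation*}
So the inductive step closes with equality at the level of the bound. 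In other words, $g_{k,n}$ is exactly a fixed point of the averaging recursion, which explains why the quadratic form $k(n-k)$ is the natural ansatz.

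The main obstacle is bookkeeping rather than analysis. The edge cases $k=1$ and $k=n-1$ bring in the boundary values $\Phi_{0,n-1}$ and $\Phi_{n-1,n-1}$, so the hypothesis must be phrased for all $0\le j\le n-1$, with the boundary cases supplied by the assumption and the nonnegativity of $\Phi$. Beyond that, the only point to check is the algebraic identity $(n-1-k)+(k-1)=n-2$, which is what makes the two terms recombine exactly into $g_{k,n}$.
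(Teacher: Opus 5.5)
Your proof is correct and is essentially the paper's argument: both run an induction on $n$ through Lemma~\ref{lem:Phi_rec}, comparing against an array that satisfies the recursion with equality, and both handle the edge cases $k=1$ and $k=n-1$ through the boundary zeros. The paper multiplies through by $\binom{n}{k}$, so the comparison array becomes $\frac{2}{q}\binom{n-2}{k-1}$ and the inductive step is Pascal's identity; you instead check directly that $g_{k,n}=\frac{2}{q}\frac{k(n-k)}{n(n-1)}$ is a fixed point of the weighted recursion, which is the same computation in a different normalization.
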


\begin{proof}
Define
\(
\Psi_{k,n}\triangleq\binom{n}{k}\Phi_{k,n}.
\)
Using Lemma~\ref{lem:Phi_rec},
\begin{align}
\Psi_{k,n}
 =\binom{n}{k}\Phi_{k,n}
\ge &\binom{n-1}{k}\Phi_{k,n-1}+\binom{n-1}{k-1}\Phi_{k-1,n-1} \nonumber \\
&=\Psi_{k,n-1}+\Psi_{k-1,n-1}.
\end{align}
The boundary values become
\(
\Psi_{0,n}=0,\, \Psi_{n,n}=0,\, \Psi_{1,2}=\binom{2}{1}\Phi_{1,2}=\frac{2}{q}.
\)
We claim that
\(
\Psi_{k,n}\ge \frac{2}{q}\binom{n-2}{k-1},
\, (1\le k\le n-1).
\)
This is true at $(n,k)=(2,1)$, since both sides equal $2/q$.
Now assume it holds for level $n-1$. Then, for $1\le k\le n-1$,
\(
\Psi_{k,n}
\ge \Psi_{k,n-1}+\Psi_{k-1,n-1} 
\ge \frac{2}{q}\binom{n-3}{k-1}+\frac{2}{q}\binom{n-3}{k-2}
= \frac{2}{q}\binom{n-2}{k-1},
\)
To show that, set $F_{k,n}\triangleq\frac{q}{2}\Psi_{k,n}$, so $F_{k,n}\ge F_{k,n-1}+F_{k-1,n-1}$
and $F_{1,2}=1$. Since $B_{k,n}\triangleq \binom{n-2}{k-1}$ satisfies Pascal's
identity with $B_{1,2}=1$, induction gives $F_{k,n}\ge B_{k,n}=\binom{n-2}{k-1}$,
and hence $\Psi_{k,n}\ge\frac{2}{q}\binom{n-2}{k-1}$. Therefore,
\(
\Phi_{k,n}=\frac{\Psi_{k,n}}{\binom{n}{k}}\ge\frac{2}{q}\,\frac{\binom{n-2}{k-1}}{\binom{n}{k}}=\frac{2}{q}\,\frac{k(n-k)}{n(n-1)}.
\)
\end{proof}

\begin{lemma}[Lower bound on $\Delta_n(d)$]
\label{lem:Delta_bound}
For every $q\ge 2$, every $n\ge 2$, and every $d\in(0,1)$,
\begin{equation}
  \Delta_n(d)\ge \frac{2d(1-d)}{q}.
  \label{eq:Delta_bound}
\end{equation}
\end{lemma}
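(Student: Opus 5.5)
The plan is to combine Theorem~\ref{thm:Phi_closed_form} with an exact binomial moment identity. The definition gives $\Delta_n(d)=\sum_{k=1}^{n-1}w_k\Phi_{k,n}$. Each $w_k\ge 0$, so I would substitute the closed-form lower bound \eqref{eq:Phi_closed_form} term by term:
\begin{equation*}
  \Delta_n(d)\;\ge\;\frac{2}{q}\,\frac{1}{n(n-1)}\sum_{k=1}^{n-1}\binom{n}{k}d^{n-k}(1-d)^k\,k(n-k).
\end{equation*}
The summand $k(n-k)$ vanishes at $k=0$ and $k=n$. Hence the sum may be extended to all $0\le k\le n$, and it equals $\mathbb{E}[K(n-K)]$ for $K\sim\mathrm{Bin}(n,1-d)$.

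The key computation is $\mathbb{E}[K(n-K)]=n(n-1)d(1-d)$. I would prove it combinatorially. Write $K=\sum_i B_i$, where the $B_i$ are i.i.d.\ $\mathrm{Bern}(1-d)$ survival indicators. Then $K(n-K)=\sum_{i\ne j}B_i(1-B_j)$, which has $n(n-1)$ ordered pairs, each with expectation $(1-d)d$. Alternatively, one can use $\binom{n}{k}k(n-k)=n(n-1)\binom{n-2}{k-1}$ and sum by the binomial theorem. Substituting gives $\Delta_n(d)\ge\frac{2}{q}\,d(1-d)$, which is \eqref{eq:Delta_bound}.

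The main obstacle is confirming that the hypotheses of Theorem~\ref{thm:Phi_closed_form} actually hold for every $q$, since the theorem assumes them. First, $\Phi_{0,n}=\Phi_{n,n}=0$, because all entries $N_n(x,y)$ at those levels lie in $\{0,1\}$, as noted after Definition~\ref{def:Phi}. Second, for $\Phi_{1,2}$ I would generalize Example~\ref{ex:phi}:
\begin{itemize}
  \item For $x=st$ with $s\ne t$, both entries $N_2(x,s)$ and $N_2(x,t)$ equal $1$ and contribute $0$.
  \item For the $q$ constant strings $x=ss$, we have $N_2(ss,s)=2$, contributing $2\log_2 2=2$.
\end{itemize}
Hence $\Phi_{1,2}=\frac{2q}{q^2\cdot 2}=\frac1q$, as required. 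With these checks done, the chain of inequalities above holds for all $n\ge 2$ and $d\in(0,1)$. For $n=2$ it is tight, and it recovers $\Delta_2(d)=d(1-d)$ when $q=2$.
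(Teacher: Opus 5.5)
Your proposal is correct and follows the paper's proof: substitute the bound of Theorem~\ref{thm:Phi_closed_form} term by term, then evaluate $\mathbb{E}[K(n-K)]=n(n-1)d(1-d)$ for $K\sim\mathrm{Bin}(n,1-d)$ (the paper via $n\,\mathbb{E}[K]-\mathbb{E}[K^2]$, you via ordered indicator pairs). Your explicit check that $\Phi_{0,n}=\Phi_{n,n}=0$ and $\Phi_{1,2}=1/q$ for every $q$ is a welcome addition, since the paper's theorem only assumes these hypotheses and its proof of this lemma does not verify them.
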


\begin{proof}
By definition,
\(
\Delta_n(d)=\sum_{k=1}^{n-1} \binom{n}{k} d^{n-k}(1-d)^k \,\Phi_{k,n}.
\)
Using Theorem~\ref{thm:Phi_closed_form},
\[
\Delta_n(d)\ge
\frac{2}{q\,n(n-1)}
\sum_{k=1}^{n-1}\binom{n}{k} d^{n-k}(1-d)^k\,k(n-k).
\]
Let $K\sim\mathrm{Bin}(n,1-d)$. Then the sum is exactly
\(
\mathbb{E}[K(n-K)]
= n\,\mathbb{E}[K]-\mathbb{E}[K^2]
= n(n-1)(1-d)d.
\)
Therefore,
\(
\Delta_n(d)\ge
\frac{2}{q\,n(n-1)}\,n(n-1)(1-d)d
=\frac{2d(1-d)}{q}.
\)
\end{proof}


\begin{table*}[ht]
\centering
\caption{Capacity bounds $C_{q,n}$ for $q\in\{2,3\}$, $n\in\{3,5,10\}$, $d\in\{0.05,0.10,0.20\}$. \(C_{(q,n)}\) computed by the Blahut--Arimoto algorithm \cite{blahut1972, arimoto1972}}
\label{tab:compare}
\begin{tabular}{ccc cccc cc}
\toprule
$q$ & $n$ & $d$ & $LB_1$ (this paper) & $LB_2$~\cite{diggavi2001} & KM~\cite{kanoria2013} & LB$^+$ (this paper) & $C_{(q,n)}$ & UB~\cite{mitzenmacher2009} \\
\midrule
2 & 3  & 0.05 & 0.664 & 0.714 & 0.730 & 0.910 & 0.910 & 0.950 \\
2 & 3  & 0.10 & 0.431 & 0.531 & 0.569 & 0.825 & 0.827 & 0.900 \\
2 & 3  & 0.20 & 0.078 & 0.278 & 0.372 & 0.668 & 0.676 & 0.800 \\
\midrule
2 & 5  & 0.05 & 0.664 & 0.714 & 0.730 & 0.886 & 0.887 & 0.950 \\
2 & 5  & 0.10 & 0.431 & 0.531 & 0.569 & 0.782 & 0.786 & 0.900 \\
2 & 5  & 0.20 & 0.078 & 0.278 & 0.372 & 0.602 & 0.613 & 0.800 \\
\midrule
2 & 10 & 0.05 & 0.664 & 0.714 & 0.730 & 0.851 & 0.852 & 0.950 \\
2 & 10 & 0.10 & 0.431 & 0.531 & 0.569 & 0.722 & 0.728 & 0.900 \\
2 & 10 & 0.20 & 0.078 & 0.278 & 0.372 & 0.516 & 0.531 & 0.800 \\
\midrule
3 & 3  & 0.05 & 1.219 & 1.249 & ---   & 1.453 & 1.454 & 1.506 \\
3 & 3  & 0.10 & 0.957 & 1.016 & ---   & 1.328 & 1.329 & 1.426 \\
3 & 3  & 0.20 & 0.546 & 0.663 & ---   & 1.095 & 1.101 & 1.268 \\
\midrule
3 & 5  & 0.05 & 1.219 & 1.249 & ---   & 1.425 & 1.426 & 1.506 \\
3 & 5  & 0.10 & 0.957 & 1.016 & ---   & 1.276 & 1.279 & 1.426 \\
3 & 5  & 0.20 & 0.546 & 0.663 & ---   & 1.011 & 1.020 & 1.268 \\
\midrule
3 & 10 & 0.05 & 1.219 & 1.249 & ---   & 1.386 & 1.387 & 1.506 \\
3 & 10 & 0.10 & 0.957 & 1.016 & ---   & 1.208 & 1.212 & 1.426 \\
3 & 10 & 0.20 & 0.546 & 0.663 & ---   & 0.908 & 0.920 & 1.268 \\
\bottomrule
\end{tabular}
\end{table*}

\begin{lemma}[Explicit positivity certificate]
\label{lem:positivity}
For all $q\ge 2$, $n\ge 2$, and $d\in(0,1)$:
\begin{equation}
  \Delta_n(d)\;\ge\;
  2d(1-d)^{n-1}\!\left(1-\!\left(\tfrac{q-1}{q}\right)^{n-1}\right)>0.
  \label{eq:cert}
\end{equation}
\end{lemma}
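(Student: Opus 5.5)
The plan is to discard all but one level of the sum defining $\Delta_n(d)$ and compute that level exactly from the run structure of $x$. Every term $w_k\Phi_{k,n}$ is nonnegative, so
\[
\Delta_n(d)\;\ge\; w_{n-1}\Phi_{n-1,n}=n\,d(1-d)^{n-1}\,\Phi_{n-1,n}.
\]
It therefore suffices to show
\[
\Phi_{n-1,n}\;\ge\;\frac{2}{n}\Bigl(1-\bigl(\tfrac{q-1}{q}\bigr)^{n-1}\Bigr).
\]
Positivity of the right-hand side of \eqref{eq:cert} is then immediate: for $d\in(0,1)$ and $n\ge 2$ we have $(1-d)^{n-1}>0$ and $\bigl(\tfrac{q-1}{q}\bigr)^{n-1}<1$.

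The key structural step is the single-deletion level $k=n-1$. Write $x$ as a concatenation of maximal runs of lengths $r_1,\dots,r_m$, with $\sum_i r_i=n$. Deleting any position inside run $i$ produces the same word $y^{(i)}$. Deletions from different runs produce different words, because the run-length profiles differ. Hence the nonzero values of $N_n(x,\cdot)$ at level $n-1$ are exactly $r_1,\dots,r_m$, which is consistent with the row sum $\binom{n}{n-1}=n$ from Lemma~\ref{lem:rowsum}. From \eqref{eq:phi} this gives
\[
\phi_{n-1,n}(x)=\frac1n\sum_{i=1}^m r_i\log_2 r_i .
\]
I expect this run characterization, in particular the injectivity of runs to outputs, to be the only step needing care. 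I would argue it by noting that $y^{(i)}$ has run-length sequence equal to that of $x$ with $r_i$ decreased by one, with the run dropped if $r_i=1$ and its neighbours merging if they then become adjacent with equal symbols. If that is delicate, I would instead cite the standard fact that distinct single deletions within different runs give distinct words.

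Next I would use the elementary inequality $r\log_2 r\ge 2(r-1)$ for integers $r\ge1$. It holds with equality at $r=1,2$, and for $r\ge2$ the left side grows at rate $\log_2 r+\log_2 e\ge 2$. This yields
\[
\sum_i r_i\log_2 r_i\;\ge\;2\sum_i (r_i-1)=2\,(n-m)=2A(x),
\]
where $A(x)=\#\{j\in[n-1]:x_j=x_{j+1}\}$ counts equal adjacent pairs. Under the uniform input, $\mathbb{E}[A(X)]=(n-1)/q$. Averaging over $x$ therefore gives $\Phi_{n-1,n}\ge \frac{2(n-1)}{qn}$.

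It remains to compare this with the target. By Bernoulli's inequality, or a union bound, $1-(1-1/q)^{n-1}\le (n-1)/q$. Indeed $1-(1-1/q)^{n-1}$ is exactly the probability that $A(X)\ge1$, and one could equally use $\sum_i r_i\log_2 r_i\ge 2\cdot\mathbf 1\{A(x)\ge1\}$. Either route gives $\Phi_{n-1,n}\ge\frac{2}{n}\bigl(1-(\frac{q-1}{q})^{n-1}\bigr)$. Multiplying by $w_{n-1}=nd(1-d)^{n-1}$ yields \eqref{eq:cert}.
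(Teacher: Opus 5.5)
Your proof is correct. It shares the paper's first step: keep only the $k=n-1$ term, with $w_{n-1}=nd(1-d)^{n-1}$. Where it differs is in how you bound $\Phi_{n-1,n}$.

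\textbf{The paper's route.} The paper uses only the indicator argument you mention at the end. Any $x$ with an equal adjacent pair has some $N_n(x,y^*)\ge 2$, so it contributes at least $f(2)=2$. Exactly $q(q-1)^{n-1}$ strings have no such pair. Dividing by $q^n n$ gives $\frac{2}{n}\bigl(1-(\frac{q-1}{q})^{n-1}\bigr)$ directly, with no final comparison step.

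\textbf{Your route.} You compute $\phi_{n-1,n}(x)$ exactly from the run lengths and linearize with $r\log_2 r\ge 2(r-1)$ on integers. Linearity of expectation then gives $\Phi_{n-1,n}\ge \frac{2(n-1)}{qn}$, that is,
\[
\Delta_n(d)\ge \frac{2(n-1)}{q}\,d(1-d)^{n-1}.
\]
The stated certificate then follows from Bernoulli's inequality. In effect you replace $\Pr(A(X)\ge 1)$ by $\mathbb{E}[A(X)]$. This is at least as strong, and strictly stronger for $n\ge 3$.

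\textbf{Trade-off.} Your route gives a better constant but needs the run-length description of single deletions. The paper's route is shorter, and its only combinatorial input is the count of strings without adjacent repeats.

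\textbf{The step you flagged.} Injectivity of runs to outputs is not needed for a lower bound. You only need that positions in the same run give the same word. If several runs did give the same output, the corresponding $N_n(x,y)$ would be a sum of run lengths. Super-additivity of $f(u)=u\log_2 u$, proved in Lemma~\ref{lem:Phi_rec}, would still give $\sum_y f(N_n(x,y))\ge\sum_i f(r_i)$. Injectivity does hold anyway, by a cleaner argument than run-length profiles: for $p<p'$, $x_{\setminus\{p\}}=x_{\setminus\{p'\}}$ if and only if $x_p=x_{p+1}=\cdots=x_{p'}$.
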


\begin{proof}
Retain only the $k=n-1$ term in $\Delta_n(d)$:
$\Delta_n(d)\ge w_{n-1}\Phi_{n-1,n}=nd(1-d)^{n-1}\Phi_{n-1,n}$.
For $k=n-1$ (exactly one deletion), $N_n(x,y)$ counts positions whose
deletion yields $y$.  If $x$ has at least one pair of equal adjacent
symbols, then two positions yield the same output, so $N_n(x,y^*)\ge 2$
for some $y^*$.  Since $f(u)=u\log_2 u$ satisfies $f(2)=2>0$, each
such string contributes at least $2$ to
$\sum_y N_n(x,y)\log_2 N_n(x,y)$.
The number of strings in $\Sigma_q^n$ with \emph{no} equal adjacent
symbols is $q(q-1)^{n-1}$, so at least $q^n-q(q-1)^{n-1}$ strings
each contribute at least $2$:
\(
\sum_{x,y}N_n(x,y)\log_2 N_n(x,y)\ge 2\bigl(q^n-q(q-1)^{n-1}\bigr).
\)
Dividing by $q^n\binom{n}{n-1}=q^n n$ gives
$\Phi_{n-1,n}\ge\frac{2}{n}(1-(({q-1})/{q})^{n-1})$.
Substituting $w_{n-1}=nd(1-d)^{n-1}$ yields~\eqref{eq:cert}.
Positivity holds since $(q-1)^{n-1}<q^{n-1}$ for $q\ge 2$, $n\ge 2$.
\end{proof}

Combining Theorem~\ref{thm:sandwich}, Corollary~\ref{cor:exact_uniform_lb},
and Lemma~\ref{lem:positivity}, we obtain
\begin{align}
  (1-d)\log_2 q &
  + \frac{1}{n}H_{\mathrm{Bin}}(n,1-d)
  - h_2(d)
  + \frac{\Delta_n(d)}{n} \nonumber \\
  &\;\le\; C_{q,n}\;\le\;
  (1-d)\log_2 q,
  \label{eq:full_sandwich1}
\end{align}
where \(\Delta\) given from \eqref{eq:cert}.
From Example~\ref{ex:n2q2} and Example~\ref{ex:phi}, we have
\(\Delta_2(d)=d(1-d)\). Thus, for \(q=2\) and \(n=2\), we have:
\(
(1-d)+\frac{1}{2}H_{\mathrm{Bin}}(2,1-d)-h_2(d)+\frac{d(1-d)}{2}
\;\le\; C_{2,2}\;\le\;(1-d).
\)

\section{Numerical Verification}
\label{sec:numerical}

Table~\ref{tab:compare} reports six bounds on the capacity per symbol
$C_{q,n}$ for $q=2,3$ and block lengths $n=3,5,10$ across three
deletion probabilities $d\in\{0.05,0.10,0.20\}$. The bounds are defined as follows.
\emph{ The basic lower bound (\(LB_1\)), $LB_1=(1-d)\log_2 q-h_2(d)$ from this paper, ~\eqref{eq:sandwich}}, Diggavi--Grossglauser bound ($LB_2$)~\cite{diggavi2001}, and Upper Bound (UB), $UB=(1-d)\log_2 q$ from ~\cite{mitzenmacher2009},
and the Diggavi--Grossglauser lower bound~\cite{diggavi2001} is ~\eqref{eq:B2}.
For $q=2$ only, the Kanoria--Montanari asymptotic expansion~\cite{kanoria2013} gives as KM, where
\(
  \mathrm{KM} = 1 + d\log_2 d - A_1\,d + A_2\,d^2,
\)
with values of $A_1=1.15416,A_2=1.67815$. Dashes indicate KM does not apply for $q>2$.
\emph{The tightened lower bound (LB$^+$) from this paper is the exact uniform-input
rate and a proved lower bound on $C_{q,n}$ as ~\eqref{eq:tighter_lb}}
\(
  \mathrm{LB}^{+} = (1-d)\log_2 q
    + \frac{1}{n}H_{\mathrm{Bin}}(n,1-d)
    - h_2(d)
    + \frac{\Delta_n(d)}{n}.
\)
The exact finite-block capacity per channel use of the $q$-ary with blocklength $n$ is denoted by $BAC_{(q,n)}$ and is computed using the
Blahut--Arimoto algorithm.
The ordering
\(
  \mathrm{LB_1} \;\le\; LB_2 \;\le\; \mathrm{LB}^{+} \;\le\; C_{q,n} \;\le\; \mathrm{UB}
\)
holds in every entry of the table, confirming the capacity sandwich.
The Blahut--Arimoto values lie strictly above $\mathrm{LB}^{+}$ in all
cases, and the gap narrows as $n$ grows.

\section{Conclusion}
\label{sec:conclusion}

The pattern-count scalar $N_n(x,y)$ — the number of distinct deletion
subsets mapping $x$ to $y$ — provides a self-contained combinatorial
foundation for the $q$-ary deletion channel.  Its two-case recursion
holds uniformly for all $q\ge 2$.  The row-sum identity
$\sum_y N_n=\binom{n}{k}$ certifies stochastic normalization via the
binomial theorem; the column-sum identity $\sum_x N_n=q^{n-k}\binom{n}{k}$,
propagated through Pascal's identity, implies uniform output under
uniform input.
These properties chain together cleanly: uniform output
$\Rightarrow$ exact entropy formula~\eqref{eq:HY}
$\Rightarrow$ capacity sandwich~\eqref{eq:sandwich}
$\Rightarrow$ tightened bound~\eqref{eq:tighter_lb} via the correction
$\Phi_{k,n}$ extracted from the integer values of $N_n$
$\Rightarrow$ small-$d$ lower bound~\eqref{eq:smalld}.
The positivity of $\Delta_n(d)$ is certified explicitly by the
count of strings with repeated adjacent symbols.
Future directions include tight asymptotic characterization of
$\Delta_n(d)/n$ as $n\to\infty$, code design exploiting the recursive
integer structure of $N_n$, and extension to insertion-deletion channels.


\end{document}